\DeclareMathOperator{\sinc}{sinc}
\theoremstyle{remark} % this makes the definitions, remarks,... in non italic format
\newtheorem{Lindeberg}{Lemma}
\newtheorem{F1}[Lindeberg]{Lemma}
\newtheorem{Fn}[Lindeberg]{Lemma}
\newtheorem{FH}[Lindeberg]{Lemma}
\newtheorem{FHextension}[Lindeberg]{Lemma}
\begin{document}
\title{Massive MIMO Relaying with Hybrid Processing}
%\title{Massive MIMO Multipair Relaying with Hybrid Precoding}
%\title{On the MIMO Capacity with Nonlinear Power Amplifiers}
%\author{\IEEEauthorblockN{Milad Fozooni\IEEEauthorrefmark{1}, Michail Matthaiou\IEEEauthorrefmark{2}, Emil Bjornson\IEEEauthorrefmark{3}, \and Trung Q. Doung\IEEEauthorrefmark{1}}} 
%\IEEEauthorblockA{\IEEEauthorrefmark{1}School of Electronics, Electrical Engineering and Computer Science,\\ Queen's University Belfast, Belfast, U.K.\\ E-mail: \{mfozooni01\}@qub.ac.uk}
%\IEEEauthorblockA{\IEEEauthorrefmark{2}ECIT Institute, Queen’s University Belfast, U.K. and S2, Chalmers University of Technology, Sweden \\ E-mail: \{m.matthaiou\}@qub.ac.uk}
%\IEEEauthorblockA{\IEEEauthorrefmark{3}Dept. of Signal Processing, KTH, Stockholm, and Dept. of Electrical Engineering, Link¨oping University, Sweden \\ E-mail: \{m.matthaiou\}@qub.ac.uk}
%\IEEEauthorblockA{\IEEEauthorrefmark{3}School of Electronics, Electrical Engineering and Computer Science,\\ Queen's University Belfast, Belfast, U.K.\\ E-mail: \{mfozooni01\}@qub.ac.uk}

\author{\IEEEauthorblockN{Milad Fozooni\IEEEauthorrefmark{1}, Michail Matthaiou\IEEEauthorrefmark{1}, Shi Jin\IEEEauthorrefmark{2}, and George C. Alexandropoulos\IEEEauthorrefmark{3}}
\IEEEauthorblockA{\IEEEauthorrefmark{1}School of Electronics, Electrical Engineering and Computer Science, Queen's University Belfast, Belfast, U.K.}
\IEEEauthorblockA{\IEEEauthorrefmark{2}National Mobile Communications Research Laboratory, Southeast University, Nanjing, China}
\IEEEauthorblockA{\IEEEauthorrefmark{3}Mathematical and Algorithmic Sciences Lab, France Research Center, Huawei Technologies Co. Ltd., Paris, France}
Emails: \{mfozooni01, m.matthaiou\}@qub.ac.uk,  jinshi@seu.edu.cn, george.alexandropoulos@huawei.com }
\maketitle
%--------------------------------------------------- Abstract -----------------------------------------------
\begin{abstract}
%Relays equipped with very large antennas inherit benefits of massive multiple-input multiple-output (MIMO) systems in terms of simple signal processing; spectral efficiency; disappeared noise, fading, and interference. On the other hand, the high cost and power consumption of radio frequency (RF) chains can be prohibitive due to the large number of mixers and power amplifiers. This hardware limitation forces relay systems to utilize analog  phase processors beside the digital processors in their own structures to reduce number of RF chains. In this paper, we consider an amplify-and-forward relay using maximum ratio combining/maximal ratio transmission. Then, we asymptotically obtain the average spectral efficiency of massive MIMO relaying under the hybrid analog-digital consideration. After that, we extend our analytical results to heavily quantized analog phase shifters. Finally, simulation results evaluate the convergence and performance of our system model under hybrid design assumption where practical transceiver constraint are considered.          
Massive multiple-input multiple-output (MIMO) \mbox{relaying} is a promising technological paradigm which can offer high spectral efficiency and substantially improved coverage. Yet, these configurations face some formidable challenges in terms of digital signal processing (DSP) power consumption and circuitry complexity, since the number of radio frequency (RF) chains may scale with the number of antennas at the relay station. In this paper, we advocate that performing a portion of the power-intensive DSP in the analog domain, using simple phase shifters and with a reduced number of RF paths, can address these challenges. In particular, we consider a multipair amplify-and-forward (AF) relay system with maximum ratio \mbox{combining}/transmission (MRC/MRT) and we determine the asymptotic spectral efficiency for this hybrid analog/digital architecture. After that, we extend our analytical results to account for heavily quantized analog phase shifters and show that the performance loss with 2 quantization bits is only $10 \%$.
\end{abstract}

\section{Introduction}
\label{Introduction}
Massive MIMO is a promising way to reap all advantages of a MIMO system, such as power and multiplexing gains in a larger scale \cite{ngo2014multipair, rusek2013scaling, marzetta2010noncooperative}. It has also been extensively investigated over the past years, thanks to its ability to cancel out noise, inter-user interference and fast fading. Fortunately, all these advantages can be obtained with simple linear signal processing \cite{marzetta2010noncooperative,suraweera2013multi,zhang2009designing}. On the other hand, MIMO relay systems have been intensively studied since they can provide extended coverage and enhance the spectral efficiency, particularly at the edges of cells \cite{dohler2010cooperative}. However, they typically require an extremely complex power allocation and precoder/decoder design \cite{jin2015ergodic,rong2009unified,phan2009power,cao2012regenerative}. Therefore, a relaying system with a massive number of antennas at the relay station has emerged as a viable candidate to address the aforementioned challenges.

Massive relaying is a fairly new research area which has been investigated from different viewpoints. In \cite{ngo2014multipair}, a massive relay is considered to overcome the detrimental effects of loop interference in full-duplex operation. There are also some other research efforts which investigate the spectral efficiency of massive relaying and derive asymptotic scaling laws \cite{suraweera2013multi,jin2015ergodic,ngo2013spectral}. However, having one RF chain dedicated to each antenna imposes several challenges in terms of DSP power consumption and circuitry complexity such that this fully digital architecture may scale badly especially in the mm-wave regime \cite{el2014spatially}. Recently, this critical issue has been addressed by researchers in other fields \cite{el2014spatially, ni2015near, liang2014low, ying2015hybrid} and many scholars hold the view that the best suitable solution is a hybrid structure consisting of a digital baseband processor and an analog RF beamformer/combiner. A considerable amount of literature assumes a hybrid analog/digital transceivers for different communications applications \cite{el2014spatially, ni2015near, liang2014low, ying2015hybrid,alkhateeb2013hybrid, molisch2004fft}, but not in the context of relaying. More recently, \cite{lee2014af} assumed a half-duplex relay system where each node is equipped with a hybrid beamformer, but hybrid processing is performed on the nodes not the relay, while no spectral efficiency characterization is being presented either. Motivated by the above discussion, this paper investigates, for the first time ever, the performance of a multipair massive relaying where part of the DSP on the relay station is performed in the analog domain, using simple analog phase shifters. In particular, we analytically determine the asymptotic end-to-end spectral efficiency by considering MRC/MRT processing, where the number of antennas grows up without bound. Then, we elaborate on three power saving strategies and deduce their asymptotic power scaling laws. These laws reveal important physical insights and tradeoffs between the transmit power of user nodes and relay. Finally, we consider the case of quantized phase shifters and work out the performance degradation for small number of quantization bits. Our numerical results indicate that (a) hybrid processing can offer a very satisfactory performance with a substantially lower power consumption and number of RF chains; and (b) 2 bits of quantization cause a minor performance degradation (approximately $10 \%$).
\begin{figure*}[!t]
	\centering
		\includegraphics[width=6.8 in]{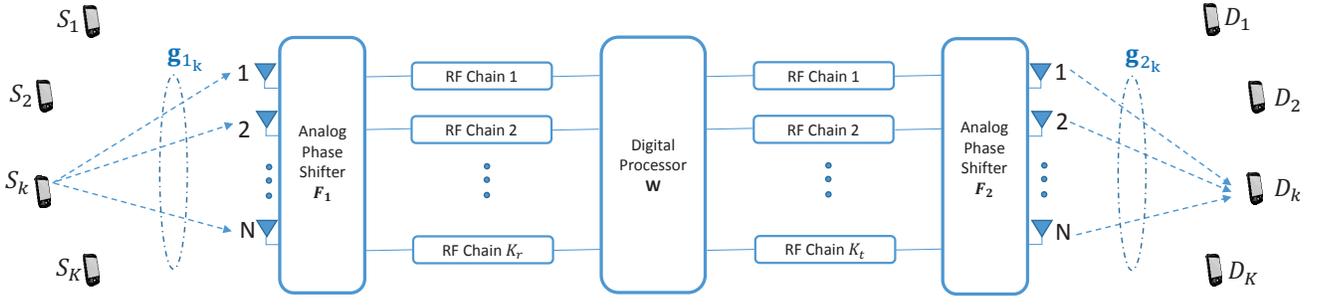}
		\caption{Block diagram of a massive relay system with a baseband digital processor 
		combined with two analog RF beamformers which are implemented using quantized phase shifters.}
\label{SystemModel}
\end{figure*}

\par \textit{Notation}: Upper and lower case bold-face letters denote matrices and vectors, respectively. Also, the symbols $(\cdot)^T$,  $(\cdot)^*$, $(\cdot)^H$, $\mathrm{Tr}(\cdot)$, $\left\| \cdot \right\|$, and $\left\| \cdot \right\|_{F}$ indicate the transpose, conjugate, conjugate transpose, trace operator, Euclidean norm, and Frobenius norm, respectively. In addition, the symbol $\left[\cdot\right]_{m,n}$ returns the \mbox{$(m,n)$-th} element of a matrix. Also, we define the phase and absolute value of a complex number $z$ with $\angle z$ and $|z|$, respectively. Furthermore, $\mathbb {E}\left[\cdot\right]$ is the expectation operation, and $\mathbf{I}_N$ refers to the $N\times N$ identity matrix.  
 %$\left|\cdot\right|$ and $\rm{det}(\cdot)$ both denote the determinant operator. Finally, 
\section{System Model}
Consider a system model as shown in Fig. \ref{SystemModel}, where a group of $K$ sources, $S_k$ with $k=1,2,\ldots,K$, communicate with their own destinations, $D_k$, via a single one-way relay, $R$. All sources and destinations are equipped with a single antenna while the relay is equipped with $N$ antennas on each side. Furthermore, the direct link among the $K$ pairs does not exist due to large path loss and heavy shadowing; to keep our analysis simple, full channel state information (CSI) is available and we ignore hardware imperfections \cite{Larsson2014MassivesTutorial}. Users send their data streams through a narrowband flat-fading propagation channel in the same time-frequency block. To keep the implementation cost of this massive MIMO relaying topology at low levels, we consider $K_r$ receive and $K_t$ transmit RF chains at the relay, with $K_r, K_t \ll N$. As mentioned above, by reducing the number of RF paths, we can avail of reduced power consumption (reduced numbers of power amplifiers and analog-to-digital converters) and reduced circuitry. Moreover, to reduce the power dissipation of DSP, we deploy an analog combiner $\mathbf{F}_1 \in \mathbb{C}^{K_r \times N}$ and precoder $\mathbf{F}_2 \in \mathbb{C}^{K_t \times N}$ at the relay station which perform phase matching at a much lower dimension compared to full DSP. Since analog processing alone is not flexible enough, the remaining fraction of signal processing is performed in the digital domain through the smaller dimensional matrix $\mathbf{W}\in \mathbb{C}^{K_t \times K_r}$. Under this model, the received signal at the relay and destination can be mathematically expressed, respectively, as 
\begin{align}
\mathbf{y}_R&=\sqrt{P_u}{\mathbf{G}}_1\mathbf{x}+\mathbf{n}_R
\\
\mathbf{y}_D&= \sqrt{P_u}\mathbf{G}_{2}^H\mathbf{F}_{2}^H \mathbf{W}\mathbf{F}_1\mathbf{G}_1\mathbf{x}+\mathbf{G}_2^H\mathbf{F}_2^H\mathbf{W}\mathbf{F}_1\mathbf{n}_R+\mathbf{n}_D
\label{Matrix Model}
\end{align}
%\raggedbottom
where $P_u$ represents the transmitted power of each source, and $\mathbf{x}=[x_1, x_2, \cdots, x_K]^T$ is the zero-mean Gaussian symbol vector such that $\mathbb{E}\big[\mathbf{x}\mathbf{x}^H\big]=\mathbf{I}_K$. 
Also, the received signal at the destinations is included in $\mathbf{y}_D \in \mathbb{C}^{K \times 1}$, while the $N$-dimensional vector $\mathbf{n}_R$ and \mbox{$K$-dimensional} vector $\mathbf{n}_D$ model the additive circularly symmetric complex Gaussian noise such that \mbox{$\mathbf{n}_R \sim \mathcal{CN}(0,\sigma_{n_R}^2 \mathbf{I}_{N})$} and $\mathbf{n}_D \sim \mathcal{CN}(0,\sigma_{n_D}^2 \mathbf{I}_{K})$. Moreover, ${\mathbf{G}}_1 \in \mathbb{C}^{N \times K}$ and ${\mathbf{G}}_2 \in \mathbb{C}^{N \times K}$ express the propagation channel between sources and relay, and between relay and destinations, respectively. More precisely, \mbox{${\mathbf{G}}_1={\mathbf{H}}_1{\mathbf{D}}_1^{\frac{1}{2}}$} and ${\mathbf{G}}_2={\mathbf{H}}_2{\mathbf{D}}_2^{\frac{1}{2}}$, where ${\mathbf{H}}_1$, ${\mathbf{H}}_2 \in \mathbb{C}^{N \times K}$ refer to small-scale fading channels with independent and identically distributed (i.i.d.) entries, each of them following $\mathcal{CN}(0,1)$. The diagonal matrices $\mathbf{D}_1$ and $\mathbf{D}_2 \in \mathbb{C}^{K \times K}$ include the \mbox{large-scale fading parameters}, where we define $\eta_{1,k}\stackrel{\Delta}{=}\big[{\mathbf{D}}_1\big]_{k,k}$ and $\eta_{2,k}\stackrel{\Delta}{=}\big[{\mathbf{D}}_2\big]_{k,k}$. From (\ref{Matrix Model}) the received signal at the $k$-th destination is given by
\begin{align}
\label{received signal}
y_{D_k}=&\sqrt{P_u}\mathbf{g}_{2_k}^H\mathbf{F}_2^H\mathbf{W}\mathbf{F}_1\mathbf{g}_{1_k}x_k+\sqrt{P_u}\sum_{i\neq k}^K \mathbf{g}_{2_k}^H\mathbf{F}_2^H\mathbf{W}\mathbf{F}_1\mathbf{g}_{1_i}x_i \nonumber
\\+&\mathbf{g}_{2_k}^H\mathbf{F}_2^H\mathbf{W}\mathbf{F}_1\mathbf{n}_R+n_{D_k}
\end{align}
where $\mathbf{g}_{1_k}$, and $\mathbf{g}_{2_k}$ denote the $k$-th column of the matrices $\mathbf{G}_1$ and $\mathbf{G}_2$, respectively. In (\ref{received signal}), the first term corresponds to the desired signal, the second term refers to the interpair-interference, while the last two terms correspond to the amplified noise at the relay and noise at the destination, respectively. Thus, the instantaneous end-to-end signal-to-interference-noise ratio (SINR) for the $k$-th pair is given by
\begin{align}
\label{Monte Carlo}
{\rm SINR}_k\hspace{-2pt}=\hspace{-2pt}\frac{{P_u}\Big|\mathbf{g}_{2_k}^H\mathbf{F}_2^H\mathbf{W}\mathbf{F}_1\mathbf{g}_{1_k}\Big|^2}
{\hspace{-2pt}P_u\hspace{-2pt}\sum\limits_{i\neq k}^K \Big|\mathbf{g}_{2_k}^H\mathbf{F}_2^H\mathbf{W}\mathbf{F}_1\mathbf{g}_{1_i}\Big|^2
\hspace{-4pt}+\hspace{-2pt}\|\mathbf{g}_{2_k}^H\mathbf{F}_2^H\mathbf{W}\mathbf{F}_1\|^2\sigma_{n_R}^2
\hspace{-2pt}+\hspace{-2pt}\sigma_{n_D}^2}.
\end{align}
Consequently, the average spectral efficiency (bits/s/Hz) of this multipair massive MIMO relaying system can be obtained as
\begin{align}
R=\frac{1}{2}\sum\limits_{k=1}^{K}{\mathbb{E}\Big[\log_2\left({1+{\rm SINR_k}}\right)}\Big]
\end{align} 
where the pre-log factor $1/2$ is due to the half-duplex relaying. As mentioned before, the role of the analog combiners is to balance out the phase of the propagation matrices. It is noteworthy that the matrices $\mathbf{F}_1$ and $\mathbf{F}_2$ can only perform analog phase shifting, hence, their elements amplitude are assumed to be fixed by $1/\sqrt{N}$. To this end, we have
\begin{align}
\angle \big[\mathbf{F}_1\big]_{i,j}&=-\angle\big[\mathbf{G}_1\big]_{j,i} \nonumber \\
\angle \big[\mathbf{F}_2\big]_{i,j}&=-\angle \big[\mathbf{G}_2\big]_{j,i} \\
\Big|\big[\mathbf{F}_1\big]_{i,j}\Big|&= \Big|\big[\mathbf{F}_2\big]_{i,j}\Big|=\frac{1}{\sqrt{N}}. \nonumber 
\end{align}
On the other hand, the baseband precoder matrix $\mathbf{W}$ can modify both the amplitude and phase of the incoming vector. Moreover, we introduce the following long-term transmit power constraint for the output of the relay station: 
\begin{align}
\label{constraint}
{\rm Tr} \left({\mathbb{E}\left[\tilde{\mathbf{y}}_R\tilde{\mathbf{y}}_R^H\right]}\right)= P_r 
\end{align}
where, $\tilde{\mathbf{y}}_R=\mathbf{F}_2^H\mathbf{W}\mathbf{F}_1\mathbf{y}_R$ demonstrates the combined relay output signal. 
%\begin{align}
%\big|\mathbf{F}_1\big|_{i,j}=\big|\mathbf{F}_2\big|_{i,j}1/\sqrt{N}
%\end{align}
In the rest of this paper, we assume MRC to combine received signals at the relay, and also consider MRT to forward the received signals from the relay to the destinations. We recall that MRC/MRT type of processing has been well integrated in the context of massive MIMO, since it offers a near-optimal performance and can be implemented in a distributed manner \cite{ngo2013energy}.
%This MRC/MRT transformation $\mathbf{W}$, is widely held in MIMO systems for numerous reasons. First, this linear transformation approaches to the optimal receiver, particularly, in a massive MIMO and under the full CSI scenario. Second, the complexity of MRC/MRT is much less than other linear receivers like zero-forcing since it is free of any inversion matrix. Third, an MRC receiver can be implemented in a distributed manner, in which the entire baseband signal is not required at the receiver side. Finally, an MRC/MRT transformation benefits an easy implementation on the analogue domain which is highly desirable.   

We now define the following symbols that will be used in our subsequent analysis; $\mathbf{A}_1\stackrel{\Delta}{=}\mathbf{F}_1\mathbf{G}_1$, and
$\mathbf{A}_2\stackrel{\Delta}{=}\mathbf{F}_2\mathbf{G}_2$.
% $\mathbf{B}_1\stackrel{\Delta}{=}\mathbf{F}_1\mathbf{F}_1^H$, and
% $\mathbf{B}_2\stackrel{\Delta}{=}\mathbf{F}_2\mathbf{F}_2^H$.  
Regarding the digital MRC/MRT transformation \mbox{matrix}, \mbox{$\mathbf{W}\stackrel{\Delta}{=}\alpha \mathbf{A}_2\mathbf{A}_1^H$} where $\mathbf{\alpha}$ is a normalization constant that guarantees that the power constraint in (\ref{constraint}) is satisfied. Therefore, we can obtain
after some mathematical manipulations
%\begin{align}
%\alpha\hspace{-3pt}=\hspace{-4pt}\sqrt{\hspace{-2pt}\frac{P_r}{\hspace{-1pt}{\rm Tr}\big(P_u\mathbf{B}_2\mathbf{A}_2\mathbf{A}_1^H\hspace{-1pt}\mathbf{A}_1\mathbf{A}_1^H\hspace{-1pt}\mathbf{A}_1\mathbf{A}_2^H\hspace{-3pt}+\hspace{-2pt}  
%\sigma_{n_R}^2\hspace{-1pt}\mathbf{B}_2\mathbf{A}_2\mathbf{A}_1^H\hspace{-1pt}\mathbf{B}_1\mathbf{A}_1\mathbf{A}_2^H\hspace{-0.5pt}\big)}}
%\end{align} or,
\begin{align}
\label{alpha}
\alpha=\sqrt{\frac{P_r}{P_u\|\mathbf{F}_2^H\mathbf{A}_2 \mathbf{A}_1^H\mathbf{A}_1\|_{F}^2  +\sigma_{n_R}^2 \|{\mathbf{F}_2^H\mathbf{A}_2\mathbf{A}_1^H\mathbf{F}_1\|}_F^2}}.
\end{align} 

%---------------------------------------------------------------------------------
\vspace{2pt}
\section{Large $N$ analysis}
In this section, we asymptotically analyze the performance of the massive MIMO relay with hybrid processing in two dedicated subsections: section \ref{ideal phase shifters} assuming ideal (continuous) phase shifters, and section \ref{quantized phase shifters} assuming phase quantization. 
\subsection{Ideal (continuous) phase shifters}
\label{ideal phase shifters}
We now briefly review some asymptotic results that will be particularly useful in our analysis.  
\begin{Lindeberg}
\label{Lindeberg}
Let $\mathbf{p}$ and $\mathbf{q}$ be two $n \times1$ mutually independent vectors whose elements are i.i.d RVs with variances $\sigma_{p}^2$ and $\sigma_{q}^2$, respectively. Then, based on the law of large numbers, we have
\begin{align}
\frac{1}{n}{\mathbf{p}^H\mathbf{p}\stackrel{a.s.}{\longrightarrow}\sigma_{p}^2}\,\,\,\, ,{\rm and} \,\,\,\,\,       \frac{1}{n}\mathbf{q}^H\mathbf{q}\stackrel{a.s.}{\longrightarrow}\sigma_{q}^2 \,\,\,\, ,{\rm as}\,\,\,\,\, n\rightarrow \infty
\end{align}
where $\stackrel{a.s.}{\longrightarrow}$ indicates almost sure convergence. Moreover, based on the Lindeberg--L\'evy central limit theorem we can write
\begin{align}
\frac{1}{\sqrt{n}} {\mathbf{p}^H\mathbf{q}\stackrel{\mathrm{dist.}}{\longrightarrow}\mathcal{CN}(0,\sigma_{p}^2\sigma_{q}^2)}
\end{align}
where $\stackrel{\mathrm{dist.}}{\longrightarrow}$ shows the convergence in distribution.
\end{Lindeberg}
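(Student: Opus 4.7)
The plan is to reduce both statements to the classical scalar Kolmogorov strong law of large numbers (SLLN) and the Lindeberg--L\'evy central limit theorem (CLT), viewing each inner product as a sum of $n$ i.i.d.\ scalar random variables and exploiting the mutual independence of $\mathbf{p}$ and $\mathbf{q}$.

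First, I would handle the almost sure convergence. Writing $\mathbf{p}^H\mathbf{p}=\sum_{i=1}^{n}|p_i|^2$, the summands $|p_i|^2$ are i.i.d.\ with common mean $\sigma_p^{2}$ (under the standing zero-mean assumption that is implicit in the $\mathcal{CN}(0,1)$ channel model used throughout the paper). Since $\mathbb{E}[|p_i|^{2}]=\sigma_p^{2}<\infty$, Kolmogorov's SLLN immediately yields $\tfrac{1}{n}\mathbf{p}^H\mathbf{p}\stackrel{a.s.}{\longrightarrow}\sigma_p^{2}$, and an identical argument delivers the statement for $\mathbf{q}$.

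Next I would turn to the distributional claim. Define $Z_i\triangleq p_i^{*}q_i$, so that $\mathbf{p}^H\mathbf{q}=\sum_{i=1}^{n}Z_i$. Mutual independence of $\mathbf{p}$ and $\mathbf{q}$, combined with the i.i.d.\ structure within each vector, makes the $Z_i$ themselves i.i.d.\ with $\mathbb{E}[Z_i]=\mathbb{E}[p_i^{*}]\mathbb{E}[q_i]=0$ and variance $\mathbb{E}[|Z_i|^{2}]=\mathbb{E}[|p_i|^{2}]\mathbb{E}[|q_i|^{2}]=\sigma_p^{2}\sigma_q^{2}$. The bivariate Lindeberg--L\'evy CLT applied jointly to $(\mathrm{Re}\,Z_i,\mathrm{Im}\,Z_i)$ then delivers $\tfrac{1}{\sqrt{n}}\sum_{i=1}^{n}Z_i$ converging in distribution to a two-dimensional real Gaussian with covariance determined by the second-order moments of $Z_i$.

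The only genuinely non-routine step is pinning down the limit to be \emph{circularly symmetric} complex normal rather than a generic complex-valued Gaussian. For this one verifies the pseudo-covariance $\mathbb{E}[Z_i^{2}]=\mathbb{E}[(p_i^{*})^{2}]\,\mathbb{E}[q_i^{2}]$, which vanishes under the circular symmetry of the $\mathcal{CN}$ entries standing in the paper (in fact, circular symmetry of either $p_i$ or $q_i$ already suffices). This forces $\mathrm{Re}\,Z_i$ and $\mathrm{Im}\,Z_i$ to have equal variance $\tfrac{1}{2}\sigma_p^{2}\sigma_q^{2}$ and zero cross-covariance, so that the CLT limit collapses exactly to $\mathcal{CN}(0,\sigma_p^{2}\sigma_q^{2})$, as claimed.
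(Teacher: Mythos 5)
Your proof is correct and follows the same route the paper implicitly takes: the paper states this lemma without any proof, simply invoking the strong law of large numbers and the Lindeberg--L\'evy CLT by name, which is exactly the reduction you carry out in detail. Your additional verification that the pseudo-covariance $\mathbb{E}\left[Z_i^2\right]=\mathbb{E}\left[(p_i^*)^2\right]\mathbb{E}\left[q_i^2\right]$ vanishes under circular symmetry --- the step needed to pin the limit down to $\mathcal{CN}(0,\sigma_p^2\sigma_q^2)$ rather than a generic complex Gaussian --- is a detail the paper glosses over but that its $\mathcal{CN}(0,1)$ and uniform-phase entries do satisfy.
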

We can now turn our attention to the analog processing matrices $\mathbf{F}_1$ and $\mathbf{F}_2$ which satisfy the following relationship
\begin{F1}
\label{F1}
As $N \to \infty$, the matrices $\mathbf{F}_1\mathbf{F}_1^H$ and also $\mathbf{F}_2\mathbf{F}_2^H$ converge pairwise to the identity matrix as follows\begin{align}
&\mathbf{F}_1\mathbf{F}_1^H\stackrel{a.s.}{\longrightarrow}\mathbf{I}_{K_r} \nonumber \\
&\mathbf{F}_2\mathbf{F}_2^H\stackrel{a.s.}{\longrightarrow}\mathbf{I}_{K_t}.
\end{align}
\end{F1}
\begin{proof}
See Appendix \ref{Lemma F1}.
\end{proof}
\begin{Fn}
\label{Fn}
As $N \to \infty$, the analog phase shifter, $\mathbf{F}_1$, preserves the distribution of the AWGN noise due to its orthonormal rows. 
\end{Fn}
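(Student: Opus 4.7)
The strategy is to combine the Gaussianity of $\mathbf{n}_R$ with the asymptotic orthonormality of the rows of $\mathbf{F}_1$ already recorded in Lemma~\ref{F1}. The key structural observation is that $\mathbf{F}_1$ is a (deterministic) function of the channel $\mathbf{G}_1$ only, and therefore, under the model assumption that $\mathbf{n}_R$ is independent of the channels, $\mathbf{F}_1$ is independent of $\mathbf{n}_R$. Conditioning on $\mathbf{F}_1$, the filtered noise $\mathbf{F}_1\mathbf{n}_R$ is a linear combination of i.i.d.\ circularly symmetric complex Gaussian random variables, so $\mathbf{F}_1\mathbf{n}_R\,\big|\,\mathbf{F}_1 \sim \mathcal{CN}(\mathbf{0},\sigma_{n_R}^2\mathbf{F}_1\mathbf{F}_1^H)$. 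Characterizing the filtered noise therefore reduces entirely to characterizing its conditional covariance.

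The next step is to invoke Lemma~\ref{F1}, which gives $\mathbf{F}_1\mathbf{F}_1^H\stackrel{a.s.}{\longrightarrow}\mathbf{I}_{K_r}$ as $N\to\infty$. Substituting into the conditional covariance, the conditional law of $\mathbf{F}_1\mathbf{n}_R$ converges, almost surely in $\mathbf{F}_1$, to $\mathcal{CN}(\mathbf{0},\sigma_{n_R}^2\mathbf{I}_{K_r})$, a law that no longer depends on $\mathbf{F}_1$. In particular, the per-component noise variance $\sigma_{n_R}^2$ is preserved even though the dimension has been reduced from $N$ to $K_r$, which is precisely the content of the lemma.

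Finally, to promote this conditional statement to an unconditional convergence in distribution, I would argue via characteristic functions: the conditional characteristic function of $\mathbf{F}_1\mathbf{n}_R$ at $\mathbf{t}\in\mathbb{C}^{K_r}$ equals $\exp(-\tfrac{1}{4}\sigma_{n_R}^2\,\mathbf{t}^H\mathbf{F}_1\mathbf{F}_1^H\mathbf{t})$, which is uniformly bounded by $1$. The dominated convergence theorem, combined with the almost-sure limit supplied by Lemma~\ref{F1}, yields that the unconditional characteristic function tends to $\exp(-\tfrac{1}{4}\sigma_{n_R}^2\|\mathbf{t}\|^2)$; L\'evy's continuity theorem then identifies the limiting distribution as $\mathcal{CN}(\mathbf{0},\sigma_{n_R}^2\mathbf{I}_{K_r})$, which is exactly the AWGN-preservation claim. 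The one delicate point --- and hence the expected ``main obstacle'' --- is this final passage from an almost-sure limit of random covariances to a distributional statement about the marginal law of $\mathbf{F}_1\mathbf{n}_R$; the dominated-convergence argument handles it cleanly, but one has to keep conditional and unconditional convergence modes distinct, and also make sure that the independence of $\mathbf{F}_1$ and $\mathbf{n}_R$ is used explicitly when pulling the expectation outside the conditional characteristic function.
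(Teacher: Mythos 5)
Your argument is correct and is precisely the reasoning the paper intends: the paper offers no explicit proof of this lemma, treating it as an immediate consequence of the Gaussianity of $\mathbf{n}_R$ and the asymptotic orthonormality $\mathbf{F}_1\mathbf{F}_1^H\stackrel{a.s.}{\longrightarrow}\mathbf{I}_{K_r}$ from Lemma~\ref{F1}, which is exactly the conditioning-plus-covariance argument you give. Your additional care with the passage from the almost-sure covariance limit to the unconditional distributional limit (via dominated convergence on the conditional characteristic function) is a rigorous filling-in of a step the paper leaves implicit, not a different route.
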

%\begin{proof}
%It is obvious that an AWGN passing through a linear device is still an AWGN noise. So, we just need to show the covariance matrix of output noise is the same as input AWGN. For this purpose, let $\tilde{\mathbf{n}}_1$ stands for the output noise. Then,
%$\mathbb{E}\big[\tilde{\mathbf{n}}_1\tilde{\mathbf{n}}_1^H\big]=\mathbb{E}\big[\mathbf{F}_1{\mathbf{n}}_1 {\mathbf{n}}_1^H \mathbf{F}_1^H\big]=\sigma_{n_R}^2\mathbf{F}_1\mathbf{F}_1^H=\sigma_{n_R}^2\mathbf{I}_{K_r}$. %Since, mean and correlation matrix for both input/output are same as each other, we conclude that $\tilde{\mathbf{n}}_1$ and $\mathbf{n}_R$ is equal in the distribution.%$\Big(\tilde{\mathbf{n}}_1\xlongequal{\text{dist.}}\mathbf{n}_R\Big)$  
%\end{proof}
\begin{FH}
\label{FH}
Let us define $\mathbf{I}_{a,b,r} \in \mathbb{C}^{a\times b}$ as an $a\times b$ diagonal matrix whose first $r$ elements on the main diagonal are $1$, and the rest are $0$. Then, 
\begin{align}
\mathbf{F}_1\mathbf{H}_1\stackrel{a.s.}{\longrightarrow}\sqrt{\frac{N\pi}{4}}\mathbf{I}_{K_r,K,r_1} \nonumber \\
\mathbf{F}_2\mathbf{H}_2\stackrel{a.s.}{\longrightarrow}\sqrt{\frac{N\pi}{4}}\mathbf{I}_{K_t,K,r_2} 
\end{align}
where $r_1=\min{\left(K_r,K\right)}$ and $r_2=\min{\left(K_t,K\right)}$.
\end{FH}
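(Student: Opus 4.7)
The plan is to compute the $(i,k)$-th entry of $\mathbf{F}_1\mathbf{H}_1$ explicitly and show that, on the dominant $\sqrt{N}$ scale, only the diagonal entries with $i=k\le r_1$ contribute. I would start from the observation that $\mathbf{D}_1$ is a positive diagonal matrix, so $\angle[\mathbf{G}_1]_{j,i}=\angle[\mathbf{H}_1]_{j,i}$ and therefore $[\mathbf{F}_1]_{i,j}=\frac{1}{\sqrt{N}}\exp(-\jmath\,\angle[\mathbf{H}_1]_{j,i})$ for $i=1,\ldots,r_1$. Expanding the matrix product then gives
\begin{align}
[\mathbf{F}_1\mathbf{H}_1]_{i,k}=\frac{1}{\sqrt{N}}\sum_{j=1}^{N} e^{-\jmath\angle[\mathbf{H}_1]_{j,i}}\,[\mathbf{H}_1]_{j,k}.
\end{align}

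Next, I would treat the diagonal and off-diagonal entries separately. For $i=k\le r_1$ the summand collapses to $\bigl|[\mathbf{H}_1]_{j,i}\bigr|$, which is Rayleigh-distributed with $\mathbb{E}\bigl[|\mathcal{CN}(0,1)|\bigr]=\sqrt{\pi}/2$; applying the strong law of large numbers (first part of Lemma~\ref{Lindeberg}) yields $\frac{1}{N}\sum_{j}\bigl|[\mathbf{H}_1]_{j,i}\bigr|\stackrel{a.s.}{\longrightarrow}\sqrt{\pi}/2$, so that the diagonal entry scales as $[\mathbf{F}_1\mathbf{H}_1]_{i,i}\sim\sqrt{N\pi/4}$. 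For $i\ne k$ I would exploit the rotational invariance of $\mathcal{CN}(0,1)$: the unit-modulus factor $e^{-\jmath\angle[\mathbf{H}_1]_{j,i}}$ depends only on column $i$ of $\mathbf{H}_1$ and is hence independent of $[\mathbf{H}_1]_{j,k}$, so each product $e^{-\jmath\angle[\mathbf{H}_1]_{j,i}}[\mathbf{H}_1]_{j,k}$ is itself $\mathcal{CN}(0,1)$ and the $N$ products are independent across $j$. The Lindeberg--L\'evy CLT (second part of Lemma~\ref{Lindeberg}) then gives $[\mathbf{F}_1\mathbf{H}_1]_{i,k}\stackrel{\mathrm{dist.}}{\longrightarrow}\mathcal{CN}(0,1)$, an $O(1)$ quantity that is negligible against the $\sqrt{N}$-scale diagonal. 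Combining the two cases delivers the first identity, and the identical argument with $\mathbf{F}_2$ and $\mathbf{H}_2$ yields the second.

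The only genuine subtlety is the interpretation of the symbol $\stackrel{a.s.}{\longrightarrow}$, since the right-hand side still depends on $N$: the claim must be read on the dominant $\sqrt{N}$ scale, i.e.\ the diagonal entries concentrate at $\sqrt{N\pi/4}$ while the off-diagonals stay bounded in distribution, so that $\mathbf{F}_1\mathbf{H}_1$ is asymptotically diagonal up to lower-order terms. I do not anticipate any deeper obstacle; the essential content is a first-moment computation for the diagonal together with a variance-boundedness argument for the off-diagonal.
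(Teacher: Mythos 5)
Your proof is correct and follows essentially the same route as the paper's Appendix B: the diagonal entries reduce to an empirical mean of Rayleigh moduli converging to $\sqrt{\pi}/2$, which gives the $\sqrt{N\pi/4}$ scaling. You are in fact slightly more careful than the paper, which treats only the diagonal case explicitly and loosely labels the law-of-large-numbers step as the ``central limit theorem,'' whereas you also dispatch the off-diagonal entries via the rotational-invariance/CLT argument and correctly flag that the statement must be read on the dominant $\sqrt{N}$ scale.
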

\begin{proof}
See Appendix \ref{LemmaFH}.
\end{proof}
Turning now to (\ref{Matrix Model}) and using the aforementioned lemmas, when $N \to \infty$, it can be shown that
\begin{align}
\mathbf{y}_D \rightarrow &\sqrt{P_u}\alpha\Big(\frac{N\pi}{4}\Big)^2\big(\mathbf{D}_2^{\frac{1}{2}}\big)^H\mathbf{D}_2^{\frac{1}{2}}\big(\mathbf{D}_1^{\frac{1}{2}}\big)^H\mathbf{D}_1^{\frac{1}{2}} \mathbf{x} \nonumber 
\\+& \alpha\Big(\frac{N\pi}{4}\Big)^{\frac{3}{2}}\big(\mathbf{D}_2^{\frac{1}{2}}\big)^H\mathbf{D}_2^{\frac{1}{2}}\big(\mathbf{D}_1^{\frac{1}{2}}\big)^H\mathbf{n}_R+\mathbf{n}_D
\end{align}
which can be simplified for the $k$-th destination as
\begin{align}
y_{D_k}\hspace{-2pt}\rightarrow \hspace{-2pt}\sqrt{P_u}\alpha\Big(\frac{N\pi}{4}\Big)^2 \eta_{2_k}\eta_{1_k} x_k \hspace{-2pt}+\hspace{-2pt} 
\alpha\Big(\frac{N\pi}{4}\Big)^{\frac{3}{2}}\eta_{2_k}\eta_{1_k}^{\frac{1}{2}}{n}_{R_k}\hspace{-2pt}+\hspace{-2pt}n_{D_k}
\end{align}
where $r=\min\left(K_r,K_t,K\right)$ and $k\in \left\{1,2,...,r\right\}$. Thus, from (\ref{Monte Carlo}) we can obtain the corresponding SINR for the $k$-th destination in the case that the number of antennas increases without bound
\begin{align}
\label{SINR}
{\rm SINR}_k \rightarrow \frac{(\frac{N\pi}{4})^4 P_u\alpha^2\eta_{1_k}^2\eta_{2_k}^{2}}{(\frac{N\pi}{4})^3\sigma_{n_R}^2\alpha^2\eta_{1_k}\eta_{2_k}^{2}+\sigma_{n_D}^2}.
\end{align}
%Further, recalling $\alpha$ from (\ref{alpha}) and using lemmas we have
%\begin{align}
%N^{3}\alpha^2={\frac{E_r}{(\frac{\pi}{4})^3 E_t\sum\limits_{i=1}^{r}{\eta_{1_i}^2\eta_{2_i}}+(\frac{\pi}{4})^2 \sigma_{n_R}^2\sum\limits_{i=1}^{r}{\eta_{1_i}\eta_{2_i}}}}.
%\end{align}

%Massive MIMO relaying can theoretically bring significant power gains. 
In the following, we investigate three power scaling strategies and draw very interesting engineering insights. Our analysis can be divided into three main cases, namely, Case \ref{caseI}) fixed $NP_u$ and $NP_r$ while $N\to\infty$; Case \ref{caseII}) fixed $NP_u$ while $N\to\infty$; Case \ref{caseIII}) fixed $NP_r$ while $N\to\infty$.

\begin{enumerate}
	\item \label{caseI} Let $\lim\limits_{N\to\infty} NP_u=E_u$ and $\lim\limits_{N\to\infty} NP_r=E_r$ where both $E_u$ and $E_r$ are  finite constants. Then, from (\ref{alpha}) we can get
\begin{align}
N^3\alpha^2\rightarrow\frac{E_r}{(\frac{\pi}{4})^3 E_t\sum\limits_{i=1}^{r}{\eta_{1_i}^2\eta_{2_i}}+(\frac{\pi}{4})^2 \sigma_{n_R}^2\sum\limits_{i=1}^{r}{\eta_{1_i}\eta_{2_i}}}
\end{align}
which finally yields (\ref{big SINR}) shown at the top of next page.
%***********************
\begin{figure*}[t!]
\begin{align}
\label{big SINR}
{\rm SINR}_k\rightarrow\frac{\big(\frac{\pi}{4}\big)^2 E_u E_r\eta_{1_k}^2\eta_{2_k}^2}{\big(\frac{\pi}{4}\big)E_r\sigma_{n_R}^2\eta_{1_k}\eta_{2_k}^2+\big(\frac{\pi}{4}\big)E_u\sigma_{n_D}^2\sum\limits_{i=1}^{r}{\eta_{1_i}^2\eta_{2_i}}+\sigma_{n_R}^2\sigma_{n_D}^2\sum\limits_{i=1}^{r}{\eta_{1_i}\eta_{2_i}}} \;\;\;\; k\in\left\{1,2,\cdots,r\right\}
\end{align}
\hrule
\end{figure*}
As a consequence, under a full CSI assumption we can reduce the transmitted power and also relay power proportionally to $\frac{1}{N}$ if the number of relay antennas grows without bound. This result is consistent with \cite{ngo2013energy}.

\item \label{caseII} Let $\lim\limits_{N\to\infty} NP_u=E_u$, where $E_u$ is a finite constant. Then, returning to (\ref{big SINR}) and after a few simplifications we obtain 
\begin{align}
{\rm SINR}_k\rightarrow\frac{\pi}{4}\frac{E_u\eta_{1_k}}{\sigma_{n_R}^2}
\end{align} 
which is associated with the following average spectral efficiency
  
%\begin{align}
%R=\sum\limits_{k=1}^{r}\log_2\Big(1+\frac{\pi}{4}\frac{E_u\eta_{1_k}}{\sigma_{n_R}^2}\Big).
%\end{align} 
\begin{align}
R_2\rightarrow \frac{1}{2}\sum\limits_{k=1}^{r}\log_2\Big(1+\frac{\pi}{4}\frac{E_u\eta_{1_k}}{\sigma_{n_R}^2}\Big).
\end{align}

The above result is quite intuitive. It shows that if the number of RF chains is, at least, equal to the number of users, i.e. $\min(K_r,K_t)\geq K$ or equivalently $r=K$,  we can enjoy full multiplexing gain and boost the achievable spectral efficiency. Moreover, in comparison with a single-input single-output (SISO) system without any intra-cell interference, our system model only suffers a $\frac{\pi}{4}$-fold reduction on the power gain due to the analog processing. All in all, this power gain penalty is quite acceptable as we have eliminated many relay RF chains, and consequently, we have substantially reduced the circuitry complexity and power consumption. 
Similar to Case 1, we can infer that we can scale down the transmit power analogously to the number of relay antennas and, still, maintain a non-zero spectral efficiency.   
\item \label{caseIII} Let $\lim\limits_{N\to\infty} NP_r=E_r$, where $E_r$ is a finite constant. Then, we can find out the average spectral efficiency in the same way as pointed out in Case \ref{caseII} to get

%\begin{align}
%R\rightarrow\sum\limits_{k=1}^{r}\log_2\Bigg(1+\frac{\pi}{4}\frac{E_r\eta_{1_k}^2\eta_{2_k}^2}{\sigma_{n_D}^2\sum\limits_{i=1}^{r}{\eta_{1_i}^2\eta_{2_i}}}\Bigg)
%\end{align}
\begin{align}
R_3\rightarrow \frac{1}{2}\sum\limits_{k=1}^{r}\log_2\Bigg(1+\frac{\pi}{4}\frac{E_r\eta_{1_k}^2\eta_{2_k}^2}{\sigma_{n_D}^2\sum\limits_{i=1}^{r}{\eta_{1_i}^2\eta_{2_i}}}\Bigg).
\end{align}
It is noteworthy that if we ignore large-scale fading effects, we get the same results in Case \ref{caseII} and \ref {caseIII}. However, Case \ref{caseIII} converges faster than Case \ref{caseII} to its own asymptotic result. This can be observed from (\ref{big SINR}). In Case \ref{caseII}, we can ignore the constant term $E_r\sigma_{n_R}^2\eta_{1_k}\eta_{2_k}^2$ in comparison with $E_u\sigma_{n_D}^2\sum_{i=1}^{r}{\eta_{1_i}^2\eta_{2_i}}$ even for moderate number of antennas. In contrast, in Case \ref{caseII}, a much higher number of antennas is required to ignore the constant term $E_u\sigma_{n_D}^2\sum_{i=1}^{r}{\eta_{1_i}^2\eta_{2_i}}$ vs. the scaled term $E_r\sigma_{n_R}^2\eta_{1_k}\eta_{2_k}^2$ in (\ref{big SINR}).     
%This can be observed from equation (\ref{big SINR}), where we can ignore the constant term $E_r\sigma_{n_R}^2\eta_{1_k}\eta_{2_k}^2$ in comparison with $E_u\sigma_{n_D}^2\sum\limits_{i=1}^{r}{\eta_{1_i}^2\eta_{2_i}}$ even for the moderate number of antennas in Case \ref{caseIII}. Unlike, in Case \ref{caseII}, large number of antenna is required to ignore the constant term $E_u\sigma_{n_D}^2\sum\limits_{i=1}^{r}{\eta_{1_i}^2\eta_{2_i}}$ vs. the scaled term $E_r\sigma_{n_R}^2\eta_{1_k}\eta_{2_k}^2$.     
\end{enumerate}
\subsection{Phase Quantization}
\label{quantized phase shifters}
Until now, we have assumed ideal analog phase shifters (beamformers) which generate any required phases. However, the implementation of such shifters with continuous phase is not feasible or, at least, is quite expensive due to hardware limitations \cite{ni2015near,el2014spatially,liang2014low,ying2015hybrid}. 
%\cite{ni2015near,el2014spatially,liang2014low,ying2015hybrid,venkateswaran2010analog,hajimiri2005integrated,zhang2005variable}
Most importantly, quantized analog beamformers are more attractive in limited feedback systems \cite{alkhateeb2013hybrid,roh2006design}. In the rest of this paper, the system performance will be assessed under quantized phases. Thus, the phase of each entry of $\mathbf{F}_1$ and $\mathbf{F}_2$ is chosen from a codebook $\Psi$ based on the closest Euclidean distance.
\begin{align}
\Psi=\Big\{0,\pm\Big(\frac{2\pi}{2^\beta}\Big),\pm2\Big(\frac{2\pi}{2^\beta}\Big),\cdots,\pm 2^{\beta-1}\Big(\frac{2\pi}{2^\beta}\Big)\Big\}
\end{align}   
where, $\beta$ denotes the number of quantization bits. As pointed out previously, the channel coefficients $\big[\mathbf{G}_1\big]_{m,n}$ and $\big[\mathbf{G}_2\big]_{m,n}$ all have uniform phase between $0$ and $2\pi$, such that $\angle\big[\mathbf{G}_i\big]_{m,n}=\phi_{m,n} \sim U(0,2\pi)$, for $i=1,2$. 
Let us define $\epsilon_{m,n}$ as the error between the unquantized phase $\phi_{m,n}$ and quantized phase $\hat{\phi}_{m,n}$ chosen from the codebook
\begin{align}
\epsilon_{m,n}\stackrel{\Delta}{=}\phi_{m,n}-\hat{\phi}_{m,n}.
\end{align}
Due to the uniform distribution of phase, we can easily conclude that the error is an uniform RV, i.e. $\epsilon_{m,n}\sim U[-\delta ,+\delta\big)$, where we define $\delta\stackrel{\Delta}{=}\frac{\pi}{2^\beta}$. This error affects Lemma \ref{FH}, and in turn, the average spectral efficiency. For this reason, we provide the following lemma to account for phase quantization.\footnote{Hereafter, we use a hat sign for the variables that are associated with the quantized beamforming assumption.} 

\begin{FHextension}
\label{FHextension}
Let $\hat{\mathbf{F}}_1$ and $\hat{\mathbf{F}}_2$ denote the analog detector and precoder, respectively. Then,
\begin{align}
&\hat{\mathbf{F}}_1\mathbf{H}_1\stackrel{a.s.}{\longrightarrow}\sqrt{\frac{N\pi}{4}}\sinc(\delta)\mathbf{I}_{K_r,K,r_1} \nonumber \\
&\hat{\mathbf{F}}_2\mathbf{H}_2\stackrel{a.s.}{\longrightarrow}\sqrt{\frac{N\pi}{4}}\sinc(\delta)\mathbf{I}_{K_t,K,r_2} 
\end{align}
where we define $\sinc(\delta)\stackrel{\Delta}{=}\frac{\sin(\delta)}{\delta}$.
\end{FHextension}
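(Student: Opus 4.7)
The plan is to adapt the proof of Lemma~\ref{FH} by introducing the quantization error as an additive perturbation of the phase and tracking how this perturbation survives the asymptotic expectation. Write $[\hat{\mathbf{F}}_1]_{i,j} = \frac{1}{\sqrt{N}} e^{-j\hat{\phi}_{j,i}}$ with $\hat{\phi}_{j,i} = \phi_{j,i} - \epsilon_{j,i}$, and expand $[\hat{\mathbf{F}}_1 \mathbf{H}_1]_{i,k}$ as a normalized sum over $j$. As in the original lemma, I would split into the diagonal case $k = i \leq r_1$ and the off-diagonal case $k \neq i$; by symmetry the treatment of $\hat{\mathbf{F}}_2 \mathbf{H}_2$ is identical and would be omitted.

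For the diagonal entry, the $\phi_{j,i}$ contributions cancel up to the quantization error and one obtains $[\hat{\mathbf{F}}_1 \mathbf{H}_1]_{i,i} = \frac{1}{\sqrt{N}} \sum_{j=1}^N |[\mathbf{H}_1]_{j,i}|\, e^{j\epsilon_{j,i}}$. The summands are i.i.d. with bounded moments, so the SLLN (Lemma~\ref{Lindeberg}) yields almost sure convergence of $\frac{1}{N}\sum_j(\cdot)$ to the per-term expectation. Since $[\mathbf{H}_1]_{j,i}$ is circularly symmetric complex Gaussian, its modulus and its argument $\phi_{j,i}$ are independent, and the quantizer makes $\epsilon_{j,i}$ a deterministic function of $\phi_{j,i}$ alone; hence the expectation factors as $\mathbb{E}[|[\mathbf{H}_1]_{j,i}|]\cdot\mathbb{E}[e^{j\epsilon_{j,i}}] = \frac{\sqrt{\pi}}{2}\cdot\sinc(\delta)$. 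The first factor is the Rayleigh mean already used in Lemma~\ref{FH}; the second follows from $\frac{1}{2\delta}\int_{-\delta}^{\delta} e^{j\epsilon}\,d\epsilon = \sinc(\delta)$ because $\epsilon_{j,i}\sim U[-\delta,\delta)$, a consequence of $\phi_{j,i}\sim U[0,2\pi)$ and the uniform codebook $\Psi$. Multiplying back by $\sqrt{N}$ recovers the claimed limit $\sqrt{N\pi/4}\,\sinc(\delta)$.

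For the off-diagonal case, $[\mathbf{H}_1]_{j,k}$ with $k\neq i$ is independent of $\phi_{j,i}$, and therefore of $\hat{\phi}_{j,i}$; each term $e^{-j\hat{\phi}_{j,i}}[\mathbf{H}_1]_{j,k}$ is zero-mean with bounded variance, so the same argument used in Appendix~\ref{LemmaFH} shows the sum is of strictly smaller order than $\sqrt{N}$ almost surely, matching the zero entries of $\mathbf{I}_{K_r,K,r_1}$. The main subtlety, and the only real obstacle, is justifying the factoring step in the diagonal computation: one must carefully separate the role of the modulus, which provides the $\sqrt{\pi}/2$ factor as in the unquantized case, from the role of the phase, which is the sole channel through which the quantizer acts and which contributes the $\sinc(\delta)$ correction. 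Once this independence is established, the remainder of the argument is a near verbatim transcription of the proof of Lemma~\ref{FH}, with every occurrence of the ideal phase-alignment factor $1$ replaced by $\sinc(\delta)$.
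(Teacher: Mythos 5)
Your proposal is correct and follows essentially the same route as the paper, whose proof of this lemma simply defers to the methodology of Appendix~\ref{LemmaFH}: decompose each entry of $\mathbf{H}_1$ into modulus and phase, note that on the diagonal the phases cancel up to the quantization error, and apply the law of large numbers so that the per-term expectation factors into the Rayleigh mean $\sqrt{\pi}/2$ times $\mathbb{E}[e^{j\epsilon}]=\sinc(\delta)$. Your explicit justification of the factoring (independence of modulus and argument for a circularly symmetric Gaussian, with $\epsilon$ a deterministic function of the argument) and your treatment of the off-diagonal entries are details the paper leaves implicit, but they do not change the approach.
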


\begin{proof}
The results follow trivially by using the \mbox{methodology} outlined in Appendix \ref{LemmaFH}.
\end{proof}

Now, we incorporate Lemma \ref{FHextension} into the system model and signal description. The modified normalization factor $\hat{\alpha}$ can be found at (\ref{big formula 2}) on the top of next page. Furthermore, the received signal for the $k$-th destination can be obtained from the following formula 
\begin{figure*}[t!]
\begin{align}
\label{big formula 2}
\hat{\alpha} \rightarrow \sqrt{\frac{P_r}{P_u\sinc^6(\delta)\|\mathbf{F}_2^H\mathbf{A}_2 \mathbf{A}_1^H\mathbf{A}_1\|_{F}^2 + \sigma_{n_R}^2 \sinc^4{(\delta)}\|{\mathbf{F}_2^H\mathbf{A}_2\mathbf{A}_1^H\mathbf{F}_1\|}_F^2}}.
\end{align}
\end{figure*}

\begin{align}
  \hat{y}_{D_k}\rightarrow &\sqrt{P_u}\sinc^4(\delta)\hat{\alpha}\Big(\frac{N\pi}{4}\Big)^2 \eta_{2_k}\eta_{1_k} x_k \\ \nonumber
+&\Big(\frac{N\pi}{4}\Big)^{\frac{3}{2}}\sinc^3(\delta)\hat{\alpha}\,\eta_{2_k}\eta_{1_k}^{\frac{1}{2}}{n}_{R_k}+n_{D_k}.
\end{align}  
Phase quantization also affects the power scaling strategies considered in Cases \ref{caseI}--\ref{caseIII} above. The corresponding results for these three cases under quantized analog processing can be modified as shown in (\ref{big formula 3}) (on the top of next page), (\ref{case 2}) and (\ref{case 3}), respectively.
%\begin{figure*}
%\begin{align}
%\label{big formula 3}
%{\rm SINR}_k \rightarrow \frac{(\frac{\pi}{4})^2\sinc^8(\delta) E_t E_r\eta_{1_k}^2\eta_{2_k}^2}{(\frac{\pi}{4})\sinc^6(\delta)E_r\sigma_{n_R}^2\eta_{1_k}\eta_{2_k}^2
%+(\frac{\pi}{4})\sinc^6(\delta)E_t\sigma_{n_D}^2\sum\limits_{i=1}^{r}{\eta_{1_i}^2\eta_{2_i}}
%+\sinc^4(\delta)\sigma_{n_R}^2\sigma_{n_D}^2\sum\limits_{i=1}^{r}{\eta_{1_i}\eta_{2_i}}}
%\end{align}
%\hrule
%\end{figure*}  
\begin{figure*}[t!] 
\begin{align}
\label{big formula 3}
\hat{R}_1 \rightarrow \frac{1}{2} \log_2 \Bigg(1+\frac{(\frac{\pi}{4})^2\sinc^8(\delta) E_u E_r\eta_{1_k}^2\eta_{2_k}^2}{(\frac{\pi}{4})\sinc^6(\delta)E_r\sigma_{n_R}^2\eta_{1_k}\eta_{2_k}^2
+(\frac{\pi}{4})\sinc^6(\delta)E_u\sigma_{n_D}^2\sum\limits_{i=1}^{r}{\eta_{1_i}^2\eta_{2_i}}
+\sinc^4(\delta)\sigma_{n_R}^2\sigma_{n_D}^2\sum\limits_{i=1}^{r}{\eta_{1_i}\eta_{2_i}}}\Bigg).
\end{align}
\hrule
\end{figure*}
\begin{align}
\label{case 2}
\hat{R}_2 \rightarrow \frac{1}{2}\sum\limits_{k=1}^{r}\log_2\Big(1+\frac{\pi}{4}\frac{E_u\eta_{1_k}}{\sigma_{n_R}^2}\sinc^2(\delta)\Big).
\end{align}
\begin{align}
\label{case 3}
\hat{R}_3 \rightarrow \frac{1}{2}\sum\limits_{k=1}^{r}\log_2\Bigg(1+\frac{\pi}{4}\frac{E_r\eta_{1_k}^2\eta_{2_k}^2}{\sigma_{n_D}^2\sum\limits_{i=1}^{r}{\eta_{1_i}^2\eta_{2_i}}}\sinc^2(\delta)\Bigg).
\end{align}  
Taken together, these results indicate a penalty function associated with quantized processing. Roughly speaking, $\sinc^2(\delta)$ is a good approximation of this power gain penalty. In a worst case, where we have only one quantization bit $\beta=1$, the SINR will be reduced by a factor of $\sinc^2(\frac{\pi}{2})=\frac{4}{\pi^2}\approx 40\%$. 
%However, there are wide range of applications, each contributes to a different quantization resolution \cite{hajimiri2005integrated, zhang2005variable, el2014spatially}. 
%Beam width is inversely relative to the number of antennas, so if we double the array's antennas we need to increase the angle accuracy two times more. Hence, a reasonable rule-of-thumb is to add $1$ bit resolution while the array's antennas doubles.                
As pointed out in \cite{el2014spatially}, a reasonable rule-of-thumb is to add $1$ bit resolution while the number of antennas doubles, since beam width is inversely relative to the number of antennas.            
%--------------------------------------------- Simulation Result ----------------------------------------
\section{Simulation Results}
In this section, Monte-Carlo simulations are provided to assess the validity of the average spectral efficiency of a multipair relay system. We assume that the relay covers a circular area with a radius of $1000$ meters. Users are located with a uniform random distribution around the relay with a guard zone of $r_g=100$ meters. We consider a Rayleigh flat fading channel for small-scale fading effects.
%-------------------------Figure 2--------------------
\begin{figure}[t!]
	\centering
		\includegraphics[width=3.48in]{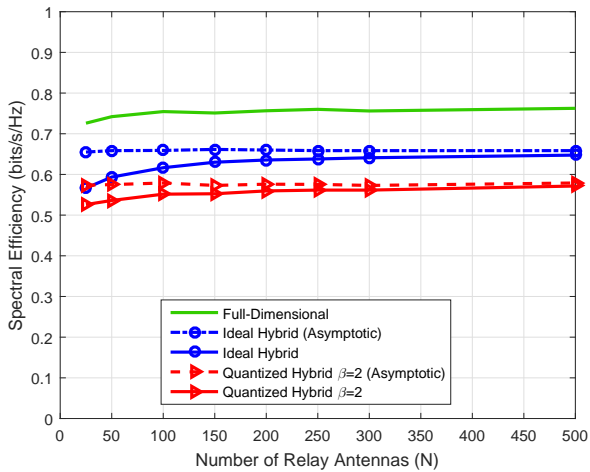}
		\caption{Average spectral efficiency in Case \ref{caseI} $\big(E_u=E_r=13 \; \mathrm{dB}\big)$.}
\label{case_1}
\end{figure} 
%------------------------ Figure 3---------------------
\begin{figure}[t!]
	\centering
		\includegraphics[width=3.48in]{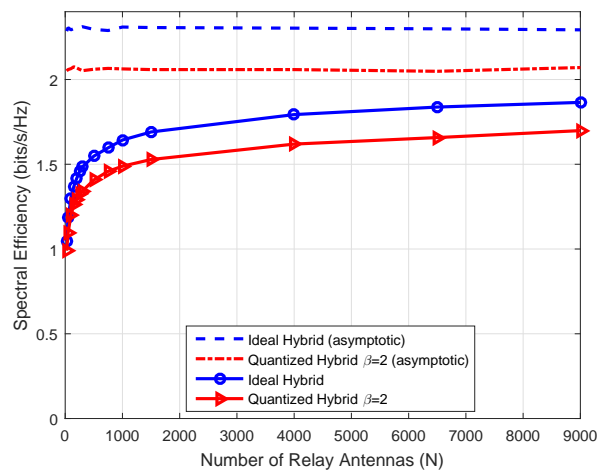}
		\caption{Average spectral efficiency in Case \ref{caseII} $\big(E_u=13
		\;\mathrm{dB} , P_r=13 \; \mathrm{dB}\big)$.}
\label{case_2}
\end{figure}
Also, the large-scale fading is modeled via a log-normal RV, with standard deviation $\sigma_{sh}$, which is multiplied by $\big(\frac{r_k}{r_g}\big)^{\nu}$ to model the path-loss as well. Here, $r_k$ is the distance between the $k$-th user and the relay, and also $\nu$ denotes the path loss exponent. Without loss of generality, we set $\sigma_{n_R}^2=\sigma_{n_D}^2=1$, $\nu=3.8$, $K_r=K_t=K=10$ and $\sigma_{sh}=8$ dB for all simulations.

Figure \ref{case_1} compares the performance of full-dimensional topology, where all amount of detection/precoding is performed in the digital domain, against that of hybrid topology with continuous and quantized analog processing. A full-dimensional massive relay is equipped by $N$ RF chains which seems to be infeasible in practice, while this number is reduced to only $K=10$ in the hybrid structure. Moreover, a hybrid relay deploys two inexpensive beamformers which can be actually implemented in the analog domain with phase shifters. It can be also observed that the hybrid scheme performs very close to the conventional scheme, with about a $10 \%$ reduction in spectral efficiency but substantially reduced complexity. However, this reduction in spectral efficiency can be compensated by deploying more antennas at the relay without any additional RF chains. Hence, this promising idea seems to be a viable alternative to conventional relaying topologies. Moreover, this figure examines a more restricted case, where there is a severe phase control on beamformers with only $2$ bit resolution. Results confirm that the proposed method suffers a negligible reduction.
Figures \ref{case_2} and \ref{case_3} demonstrate similar results for Case $2$ and $3$, respectively. Clearly, as the number of relay antennas increases, the average spectral efficiency approaches to the saturation value which is expected by our power scaling laws. Note also that the curve scales much slower in Case $2$ in comparison with Case \ref{caseI} and \ref{caseIII}.
%-------------------------------------------- Conclusion -------------------------------------------
\section{Conclusion}
Massive MIMO is a major candidate for the next generation of wireless systems. This technique combined with relays can enhance the cell coverage while it enjoys a simple signal processing at the relay. On the other hand, the high cost and power consumption of RF chains can be prohibitive due to the large number of mixers and power amplifiers. For this reason, we used an analog/digital (hybrid) structure at the relay and also reduced the number of RF chains analogously to the number of users while the system still enjoys a full multiplexing gain. Finally, we analytically quantified the system spectral efficiency and demonstrated a great performance of the proposed configuration even under coarse quantization.
%--------------------------------------------- Appendix ---------------------------------------------
%\newpage
\appendices
\section{Proof of Lemma \ref{F1}}
\label{Lemma F1}
Having discussed how to construct  $\mathbf{F}_1 \in \mathbb{C}^{K_r \times N}$, we can write each entry of this matrix as $\frac{1}{\sqrt{N}}\exp{\left(j\theta_{m,n}\right)}$, where $\theta_{m,n}$ is a uniform RV, i.e. $\theta_{m,n}\sim U\left[0,2\pi\right)$. Now, let the vectors $\mathbf{f}_{1_p}$ and $\mathbf{f}_{1_q}$ denote the $p$-th and $q$-th rows of matrix $\mathbf{F}_1$, respectively. Then,
$\mathbf{f}_{1_p}\mathbf{f}^H_{1_p}=1$ since the phases cancel out each other. On the other hand, if $N \to \infty$, due to the central limit theorem for any $p\neq q$ we have
\begin {align}
   \mathbf{f}_{1_p}\mathbf{f}^H_{1_q}=\frac{1}{N}\sum\limits_{l=1}^{N}{\mathrm{e}^{j\left(\theta_{p,l}-\theta_{q,l}\right)}} \rightarrow \mathbb{E}\Big[{\rm e}^{j\theta_p}\Big]\mathbb{E}\Big[\mathrm{e}^{-j\theta_q}\Big]=0
\end{align}
where the distribution of $\theta_p$ and $\theta_q$ are defined similar to $\theta_{m,n}$. Likewise, we can prove the second part.
\section{Proof of Lemma \ref{FH}}
\label{LemmaFH}
Let us rewrite the $\left(m,n\right)$-th entry of matrix ${\mathbf{H}}_1 \in \mathbb{C}^{N \times K}$ like $r_{m,n}\mathrm{e}^{j\phi_{m,n}}$, where the amplitude and phase have a Rayleigh and uniform distribution, respectively. In other words, $r_{m,n} \sim R\left(\sqrt{\frac{1}{2}}\right)$, and $\phi_{m,n} \sim U\left[0,2\pi\right)$. Now, let the vectors $\mathbf{f}_{1_p}$ and $\mathbf{h}_{1_p}$ denote the $p$-th row of matrix $\mathbf{F}_1$ and $p$-th column of matrix $\mathbf{H}_1$, respectively. Then, since phases cancel out each other, for any $p\leq r_1$ we have that
\begin{align}
\mathbf{f}_{1_p}\mathbf{h}_{1_p}=\frac{1}{\sqrt{N}}\sum\limits_{l=1}^{N}r_{p,l}\stackrel{(a)}{\rightarrow}\sqrt{N}\mathbb{E}\big[r_p\big]=\sqrt{\frac{N\pi}{4}}
\end{align} 
where we have used the central limit theorem in $\left(a\right)$, and the fact that $r_p$ is a Rayleigh RV with parameter $\sqrt\frac{1}{2}$. We can also prove the second part in a similar way.
%\section{Proof of Lemma \ref{FHextension}}
%\label{Lemma FHextension}
%This proof follows on from Appendix \ref{LemmaFH}. Let vectors $\hat{\mathbf{f}}_{1_p}$ and $\mathbf{h}_{1_p}$ denote the $p$-th row of matrix $\hat{\mathbf{F}}_1$ and $\mathbf{H}_1$, respectively. Then, for any $p\leq r_1$ we have
%%Like Appendix \ref{Lemma FH} we can write
%\begin{align}
%\hat{\mathbf{f}}_{1_p}^H\mathbf{h}_{1_p}=&\frac{1}{\sqrt{N}}\sum\limits_{l=1}^{N}r_{p,l}\mathrm{e}^{\big(\phi_{p,l}-\hat{\phi}_{p,l}\big)}\\ \nonumber
%= &\sqrt{N}\mathbb{E}\big[r_p\big]\mathbb{E}\big[\mathrm{e}^{\epsilon}\big]=\sqrt{\frac{N\pi}{4}}\sinc(\delta)
%\end{align}
%--------------------------------------Figure 4---------------------------
\begin{figure}[t!]
	\centering
		\includegraphics[width=3.48 in]{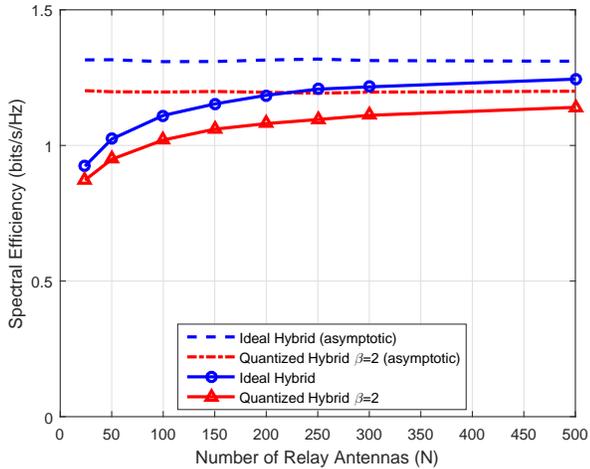}
		\caption{Average spectral efficiency in Case \ref{caseIII} $\big(P_u=13\;\mathrm{dB}, E_r=13 \; \mathrm{dB}\big)$.}
\label{case_3}
\end{figure}
\section*{Acknowledgment}
This work of S. Jin was supported by the \mbox{National} Natural Science Foundation of China under Grants $\left(61531011, 61450110445\right)$.
\bibliographystyle{IEEEtran}
\bibliography{IEEEabrv,refs}

% Generated by IEEEtran.bst, version: 1.14 (2015/08/26)
\begin{thebibliography}{10}
\providecommand{\url}[1]{#1}
\csname url@samestyle\endcsname
\providecommand{\newblock}{\relax}
\providecommand{\bibinfo}[2]{#2}
\providecommand{\BIBentrySTDinterwordspacing}{\spaceskip=0pt\relax}
\providecommand{\BIBentryALTinterwordstretchfactor}{4}
\providecommand{\BIBentryALTinterwordspacing}{\spaceskip=\fontdimen2\font plus
\BIBentryALTinterwordstretchfactor\fontdimen3\font minus
  \fontdimen4\font\relax}
\providecommand{\BIBforeignlanguage}[2]{{%
\expandafter\ifx\csname l@#1\endcsname\relax
\typeout{** WARNING: IEEEtran.bst: No hyphenation pattern has been}%
\typeout{** loaded for the language `#1'. Using the pattern for}%
\typeout{** the default language instead.}%
\else
\language=\csname l@#1\endcsname
\fi
#2}}
\providecommand{\BIBdecl}{\relax}
\BIBdecl

\bibitem{ngo2014multipair}
H.~Q. Ngo, H.~Suraweera, M.~Matthaiou, and E.~G. Larsson, ``Multipair
  full-duplex relaying with massive arrays and linear processing,''
  \emph{{IEEE} J. Sel. Areas Commun.}, vol.~32, no.~9, pp. 1721--1737, Sept.
  2014.

\bibitem{rusek2013scaling}
F.~Rusek, D.~Persson, B.~K. Lau, E.~G. Larsson, T.~L. Marzetta, O.~Edfors, and
  F.~Tufvesson, ``Scaling up {MIMO}: Opportunities and challenges with very
  large arrays,'' \emph{{IEEE} Signal Process. Mag.}, vol.~30, no.~1, pp.
  40--60, Jan. 2013.

\bibitem{marzetta2010noncooperative}
T.~L. Marzetta, ``Noncooperative cellular wireless with unlimited numbers of
  base station antennas,'' \emph{{IEEE} Trans. Wireless Commun.}, vol.~9,
  no.~11, pp. 3590--3600, Nov. 2010.

\bibitem{suraweera2013multi}
H.~Suraweera, H.~Q. Ngo, T.~Q. Duong, C.~Yuen, and E.~G. Larsson, ``Multi-pair
  amplify-and-forward relaying with very large antenna arrays,'' in
  \emph{Proc.~IEEE ICC}, June 2013, pp. 4635--4640.

\bibitem{zhang2009designing}
W.~Zhang, X.~Ma, B.~Gestner, and D.~V. Anderson, ``Designing low-complexity
  equalizers for wireless systems,'' \emph{{IEEE} Commun. Mag.}, vol.~47,
  no.~1, pp. 56--62, Jan. 2009.

\bibitem{dohler2010cooperative}
M.~Dohler and Y.~Li, \emph{Cooperative communications: hardware, channel and
  {PHY}}.\hskip 1em plus 0.5em minus 0.4em\relax John Wiley \& Sons, 2010.

\bibitem{jin2015ergodic}
S.~Jin, X.~Liang, K.-K. Wong, X.~Gao, and Q.~Zhu, ``Ergodic rate analysis for
  multipair massive {MIMO} two-way relay networks,'' \emph{{IEEE} Trans.
  Wireless Commun.}, vol.~14, no.~3, pp. 1480--1491, Mar. 2015.

\bibitem{rong2009unified}
Y.~Rong, X.~Tang, and Y.~Hua, ``A unified framework for optimizing linear
  nonregenerative multicarrier {MIMO} relay communication systems,''
  \emph{{IEEE} Trans. Signal Process.}, vol.~57, no.~12, pp. 4837--4851, Dec.
  2009.

\bibitem{phan2009power}
K.~Phan, T.~Le-Ngoc, S.~Vorobyov, and C.~Tellambura, ``Power allocation in
  wireless multi-user relay networks,'' \emph{{IEEE} Trans. Wireless Commun.},
  vol.~8, no.~5, pp. 2535--2545, May 2009.

\bibitem{cao2012regenerative}
J.~Cao, Z.~Zhong, and F.~Wang, ``Regenerative multi-way relaying: Relay
  precoding and ordered {MMSE-SIC} receiver,'' in \emph{Proc.~ IEEE VTC}, May
  2012.

\bibitem{ngo2013spectral}
H.~Q. Ngo and E.~G. Larsson, ``Spectral efficiency of the multipair two-way
  relay channel with massive arrays,'' in \emph{Proc.~IEEE ASILOMAR}, Nov.
  2013, pp. 275--279.

\bibitem{el2014spatially}
O.~El~Ayach, S.~Rajagopal, S.~Abu-Surra, Z.~Pi, and R.~W. Heath, ``Spatially
  sparse precoding in millimeter wave {MIMO} systems,'' \emph{{IEEE} Trans.
  Wireless Commun.}, vol.~13, no.~3, pp. 1499--1513, Mar. 2014.

\bibitem{ni2015near}
W.~Ni, X.~Dong, and W.-S. Lu, ``Near-optimal hybrid processing for massive
  {MIMO} systems via matrix decomposition,'' [Online]. Available:
  http://arxiv.org/pdf/1504.03777v1.pdf.

\bibitem{liang2014low}
L.~Liang, W.~Xu, and X.~Dong, ``Low-complexity hybrid precoding in massive
  multiuser {MIMO} systems,'' \emph{{IEEE} Wireless Commun. Lett.}, vol.~3,
  no.~6, pp. 653--656, Dec. 2014.

\bibitem{ying2015hybrid}
D.~Ying, F.~W. Vook, T.~A. Thomas, and D.~J. Love, ``Hybrid structure in
  massive {MIMO}: Achieving large sum rate with fewer {RF} chains,'' in
  \emph{Proc. IEEE ICC}, June 2015, pp. 2344--2349.

\bibitem{alkhateeb2013hybrid}
A.~Alkhateeb, O.~El~Ayach, G.~Leus, and R.~Heath, ``Hybrid precoding for
  millimeter wave cellular systems with partial channel knowledge,'' in
  \emph{Proc.~ IEEE ITA}, Feb. 2013.

\bibitem{molisch2004fft}
A.~F. Molisch and X.~Zhang, ``{FFT}-based hybrid antenna selection schemes for
  spatially correlated {MIMO} channels,'' \emph{{IEEE} Commun. Lett.}, vol.~8,
  no.~1, pp. 36--38, Jan. 2004.

\bibitem{lee2014af}
J.~Lee and Y.~H. Lee, ``{AF} relaying for millimeter wave communication systems
  with hybrid {RF}/baseband {MIMO} processing,'' in \emph{Proc. IEEE ICC}, June
  2014, pp. 5838--5842.

\bibitem{Larsson2014MassivesTutorial}
E.~G. Larsson, O.~Edfors, F.~Tufvesson, and T.~L. Marzetta, ``Massive {MIMO}
  for next generation wireless systems,'' \emph{IEEE Commun. Mag.}, vol.~52,
  no.~2, pp. 186--195, Feb. 2014.

\bibitem{ngo2013energy}
H.~Q. Ngo, E.~G. Larsson, and T.~L. Marzetta, ``Energy and spectral efficiency
  of very large multiuser {MIMO} systems,'' \emph{{IEEE} Trans. Commun.},
  vol.~61, no.~4, pp. 1436--1449, Apr. 2013.

\bibitem{roh2006design}
J.~C. Roh and B.~D. Rao, ``Design and analysis of {MIMO} spatial multiplexing
  systems with quantized feedback,'' \emph{{IEEE} Trans. Signal Process.},
  vol.~54, no.~8, pp. 2874--2886, Aug. 2006.

\end{thebibliography}
\end{document}